\newif\ifelsevierversion
\newif\ifarxivversion
\newtheorem{theorem}{Theorem}
\newtheorem{definition}[theorem]{Definition}  
\newtheorem{lemma}[theorem]{Lemma}
\newcommand{\tne}[1]{\ensuremath{\langle #1\rangle}}
\newcommand{\nopTape}{\ensuremath{\_}}
\newcommand{\mrkTape}{\ensuremath{M}}
\newcommand{\rgtTape}{\ensuremath{R}}
\newcommand{\lftTape}{\ensuremath{L}}
\newcommand{\TM}{\ensuremath{\mathcal{M}}}
\newcommand{\NETM}{\ensuremath{\mathcal{N}}}
\newcommand{\WBM}{\ensuremath{\mathcal{W}}}
\newcommand{\HM}{\ensuremath{\mathcal{H}}}
  \journal{Journal of Complexity}
  \newcommand{\email}[1]{\url{#1}}
\begin{document}

\newcommand{\nmtitlenotetext}{This work is the outcome of a visit by the authors to the Isaac Newton Institute for Mathematical Sciences in Cambridge (UK) for the centenary of Alan Turing's birth, hosted by the 2012 programme ``Semantics and Syntax: A Legacy of Alan Turing'.}
\ifelsevierversion
  \newcommand{\nmtitlenote}{\tnoteref{ack}}
  \newcommand{\nmtitlenoter}{\tnotetext[ack]{\nmtitlenotetext}}

  \begin{frontmatter}
\fi 
\ifarxivversion
  \newcommand{\nmtitlenote}{\footnote{\nmtitlenotetext}}
  \newcommand{\nmtitlenoter}{}
  \renewcommand*{\thefootnote}{\fnsymbol{footnote}}
  \renewcommand\Authands{,~and~} 
\fi

\title{Wang's B machines are efficiently universal, as is Hasenjaeger's small universal electromechanical toy\nmtitlenote}
\nmtitlenoter

\ifelsevierversion
  \author[1]{Turlough Neary\fnref{tnfund}\corref{correspond}}
    \address[1]{Institute for Neuroinformatics, University of Z\"urich \& ETH Z\"urich, Switzerland.}
    \fntext[tnfund]{Turlough Neary was supported by Swiss National Science Foundation grant 200021-141029.}
    \ead{tneary@ini.phys.ethz.ch}
    \cortext[correspond]{Corresponding author}

  \author[2]{Damien Woods\fnref{dwfund}\corref{correspond}}
    \address[2]{California Institute of Technology, Pasadena, CA 91125, USA.}
    \fntext[dwfund]{Damien Woods was supported by the USA National Science Foundation under grants 0832824 (The Molecular Programming Project), CCF-1219274, and CCF-1162589.}
    \ead{woods@caltech.edu}

  \author[3,4]{Niall~Murphy\fnref{nmfund}\corref{correspond}}
    \address[3]{Facultad de Inform\'atica, Universidad Polit\'ecnica de Madrid \&
      CEI-Moncloa UPM-UCM, Spain.}
      \address[4]{(Present affiliation) Microsoft Research, Cambridge, CB1 2FB, UK.}
    \fntext[nmfund]{Niall Murphy was supported by a PICATA Young Doctors fellowship from CEI Campus Moncloa, UCM-UPM, Madrid, Spain.}
    \ead{a-nimurp@microsoft.com}
  
  \author[5]{Rainer~Glaschick\fnref{rgfund}\corref{correspond}}
    \address[5]{Paderborn, Germany.}
    \fntext[rgfund]{Rainer Glaschick was supported by the Heinz Nixdorf MuseumsForum, Germany.}  
    \ead{rainer@glaschick-pb.de}
\else
  \author[1]{Turlough Neary\thanks{Turlough Neary was supported by Swiss National Science Foundation grant 200021-141029.}}
  \author[2]{Damien Woods\thanks{Damien Woods was supported by the USA National Science Foundation under grants 0832824 (The Molecular Programming Project), CCF-1219274, and CCF-1162589.}}
  \author[3]{Niall~Murphy\thanks{Niall Murphy was supported by a PICATA Young Doctors fellowship from CEI Campus Moncloa, UCM-UPM, Madrid, Spain.}}
  \author[5]{Rainer~Glaschick\thanks{Rainer Glaschick was supported by the Heinz Nixdorf MuseumsForum, Germany.}}  

  \affil[1]{Institute for Neuroinformatics, University of Z\"urich and ETH Z\"urich, Switzerland. \email{tneary@ini.phys.ethz.ch}}
  \affil[2]{California Institute of Technology, Pasadena, CA 91125, USA. \email{woods@caltech.edu}}
  \affil[3]{Facultad de Inform\'atica, Universidad Polit\'ecnica de Madrid and
    CEI-Moncloa UPM-UCM, Spain.}
  \affil[4]{(Present affiliation) Microsoft Research, Cambridge, CB1 2FB, UK. \email{a-nimurp@microsoft.com}}
  \affil[5]{Paderborn, Germany. \email{rainer@glaschick-pb.de}}
\fi
    
\ifarxivversion
  \date{}
  \maketitle
  \vspace{-6ex} 
\fi
  \begin{abstract}
    In the 1960's Gisbert Hasenjaeger built Turing Machines from electromechanical relays and uniselectors.
    Recently,
      Glaschick reverse engineered the program of one of these machines
        and found that it is a universal Turing machine.
    In fact, its  program  uses only four states and two symbols,
      making it a very small universal Turing machine.      
    (The machine has three tapes and a number of other features 
      that are important to keep in mind when comparing it to
        other small universal machines.)
    Hasenjaeger's machine simulates Hao Wang's~B machines, 
      which were proved universal by Wang.         
    Unfortunately, Wang's original simulation algorithm
      suffers from an exponential slowdown when simulating Turing machines. 
    Hence, via this simulation, Hasenjaeger's machine also has an exponential slowdown when simulating Turing machines.
    In this work, we give a new efficient simulation algorithm for Wang's B machines by showing that they simulate Turing machines with only a polynomial slowdown. 
    As a second result, we find that Hasenjaeger's machine also efficiently simulates Turing machines in polynomial time.
    Thus, Hasenjaeger's machine is both small and fast.
      In another application of our result, we show that Hooper's small
      universal Turing machine simulates Turing machines in polynomial time,
      an exponential improvement.
  \end{abstract}
\ifelsevierversion
    \begin{keyword}
      Polynomial time \sep%
      computational complexity \sep%
      small universal Turing machines \sep%
      Wang's B machine \sep%
      non-erasing Turing machines \sep%
      models of computation 
      \MSC[68Q05]
    \end{keyword}
  \end{frontmatter}
\fi

  \begin{figure}[t]
    \centering
    \begin{tikzpicture}
      \node[anchor=south west,inner sep=0]
        (image) at (0,0) 
          {\includegraphics[width=0.9\textwidth]{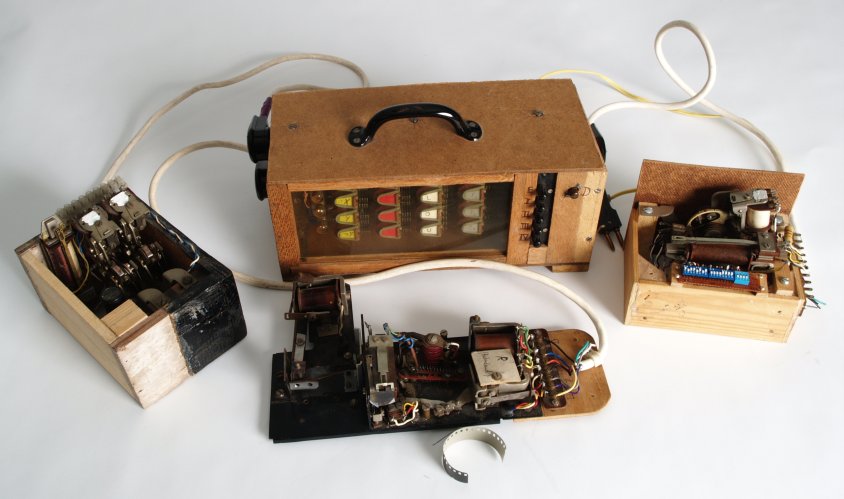}};
      \begin{scope}[x={(image.south east)},y={(image.north west)}]

        \node (l4stateproc) at (0.5, 0.95) {Control unit: main program};
        \node (p4stateproc) at (0.5, 0.8) {};
        \draw [->,very thick] (l4stateproc) -- (p4stateproc);

        \node (lprogtape) at (0.85, 0.85) {Program tape head};
        \node (pprogtape) at (0.85, 0.62) {};
        \draw [->,very thick] (lprogtape) -- (pprogtape);

        \node (l18bitprog) at (0.85, 0.25){$P$: Program tape};
        \node (p18bitprog) at (0.85, 0.45) {};
        \draw [->,very thick] (l18bitprog) -- (p18bitprog);

        \node (lrestape) at (0.8, 0.05){$W$: Non-erasable work tape};
        \node (prestape) at (0.53, 0.1) {};
        \draw [->,very thick] (lrestape.west) -- (prestape);
        
        \node (lrestaperw) at (0.21, 0.05){Non-erasing read/write head};
        \node (prestaperw) at (0.45, 0.15) {};
        \draw [->,very thick] (lrestaperw) -| (prestaperw);

        \node (lcounttape) at (0.15, 0.9){$C$: Counter tape};
        \node (pcounttape) at (0.13, 0.62) {};
        \draw [->,very thick] (lcounttape) -- (pcounttape);
      \end{scope}
    \end{tikzpicture}
    \caption{Hasenjaeger's universal electromechanical Turing machine. The wiring in the control unit encodes a universal program, that uses only four states and two symbols, for simulating Wang~B machines.
    The program of a Wang~B machine may be stored on the program tape. There are two additional tapes which are used for the simulation, a counter tape and a work tape.   }\label{fig:Hasenjaeger's UTM}
  \end{figure}

  \section{Introduction}\label{sec:intro}
  In the 1960's Gisbert Hasenjaeger built Turing Machines  from
  electromechanical relays and uniselectors, but never published details of 
  these machines.
  Recently, Hasenjaeger's family donated
    the machine shown in Figure~\ref{fig:Hasenjaeger's UTM} 
    to the Heinz Nixdorf MuseumsForum.\footnote{Heinz Nixdorf MuseumsForum, Paderborn, Germany. \url{http://www.hnf.de/}}
  At the request of the MuseumsForum,
  Glaschick reverse engineered the table
  of behaviour for this machine~\cite{Glaschick2012a,Glaschick2012b}, and, using
  Hasenjaeger's notes~\cite{Hasenjaeger1987}, determined the machine's
  encoding and operation. It was found that Hasenjaeger's machine
  simulates Wang's B machines~\cite{Wang1957}. 
 
  Wang used a unary encoding when proving his B machines universal and hence 
  they suffer from an exponential slowdown when simulating Turing machines.
  As a result, Hasenjaeger's machine also suffers from an exponential slowdown.
  In this work, we show that Wang~B machines and Hasenjaeger's machine
  simulate Turing machines with polynomial slowdown via the following chain of
  simulations:
  \begin{equation*}
  \begin{array}{c}
     \textrm{Turing Machine} \mapsto  \textrm{non-erasing Turing Machine} \mapsto \\ 
      \textrm{Wang B machine} \mapsto  \textrm{Hasenjaeger's universal Turing Machine} 
  \end{array}
  \end{equation*}
  where $A\mapsto B$ denotes that $A$ is simulated by $B$.
  With the exception of the Wang~B machine simulation of non-erasing machines,
  all of the simulations in the above chain are known to be efficient:
  non-erasing Turing machines simulate Turing machines with a polynomial
  slowdown in time~\cite{Zykin1963}, and Hasenjaeger's machine simulates Wang
  B machines in linear time.  We complete the chain of efficient simulations
  by giving a new simulation algorithm that shows that Wang's B machines
  simulate Turing machines with only a polynomial slowdown in the simulated
  Turing machine's time. An immediate consequence of our new algorithm is that
  Hasenjaeger's machine also simulates Turing machines in polynomial time.
  This adds to the growing body of work~\cite{WN2006c,NW2006,neary2012complexity}
  showing that the simplest known universal models of computation need not suffer from a
  exponential slowdown.

    As mentioned above, the simulation of Turing machines by non-erasing
    Turing machines is already known to run with a polynomial slowdown~\cite{Zykin1963}.  However, to keep our paper self-contained, we give
    our own polynomial (cubic) time  simulation in Section~\ref{sec:nonerasingTM}.
    This is followed by our main result in Section~\ref{sec:Wang B machines simulate NETMs},
    where we show that Wang~B machines simulate non-erasing
    Turing machines in linear time.  So from Sections~\ref{sec:nonerasingTM}
    and~\ref{sec:Wang B machines simulate NETMs} we get Theorem~\ref{thm:Wang efficiently universal}.
  \begin{theorem}
    \label{thm:Wang efficiently universal} 
    Let $\mathcal{M}$ be a deterministic Turing
    machine with a single binary tape that runs in time~$t$. Then there is
    a Wang~B machine $\mathcal{W}_{\mathcal{M}}$ that simulates the computation of
    $\mathcal{M}$ in time $O(t^3)$. 
  \end{theorem}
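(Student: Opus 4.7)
The plan is to obtain Theorem~\ref{thm:Wang efficiently universal} by composing the two simulations that the paper has just advertised for Sections~\ref{sec:nonerasingTM} and~\ref{sec:Wang B machines simulate NETMs}. Concretely, given a binary deterministic Turing machine $\mathcal{M}$ running in time $t$, I first build a non-erasing Turing machine $\mathcal{N}_{\mathcal{M}}$ simulating $\mathcal{M}$, and then apply the Wang~B simulation of non-erasing machines to $\mathcal{N}_{\mathcal{M}}$ to obtain the desired $\mathcal{W}_{\mathcal{M}}$. The final time bound is then simply the product of the two slowdowns, so the real work is to show that the two stages compose cleanly and that the overall bound comes out as $O(t^3)$.

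The main obstacle lies in the first stage, where the non-erasing restriction forbids overwriting any previously written non-blank symbol. My plan here is to represent each simulated cell of $\mathcal{M}$ by a \emph{block} on the tape of $\mathcal{N}_{\mathcal{M}}$, and to record successive writes by appending a fresh ``timestamped'' symbol into a new position of that block while leaving earlier symbols untouched; the current value of a simulated cell is the most recently appended symbol, which can be located by scanning the block. Since $\mathcal{M}$ performs at most $t$ writes and touches at most $t$ distinct cells, every block has length $O(t)$ and the used portion of $\mathcal{N}_{\mathcal{M}}$'s tape has length $O(t^2)$. Each simulated step of $\mathcal{M}$ therefore costs $O(t^2)$ (locating the current block, scanning it for the most recent marker, appending a new marker, and moving to the appropriate neighbouring block), and summing over $t$ steps gives the claimed $O(t^3)$ bound. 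I would then check that the auxiliary bookkeeping (end-of-block markers, direction of last move, and the finite number of ``timestamps per step'' needed to distinguish reads from writes) fits into a fixed non-erasing alphabet, so that $\mathcal{N}_{\mathcal{M}}$ is indeed a legitimate non-erasing Turing machine rather than a machine with an input-dependent alphabet.

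The second stage is then essentially a packaging step: by the linear-time result of Section~\ref{sec:Wang B machines simulate NETMs}, there is a Wang~B machine $\mathcal{W}_{\mathcal{M}}$ that simulates each step of $\mathcal{N}_{\mathcal{M}}$ in $O(1)$ Wang~B instructions, so $\mathcal{W}_{\mathcal{M}}$ runs in time $O(t^3) \cdot O(1) = O(t^3)$. The only points worth verifying in the composition are that the tape alphabet of $\mathcal{N}_{\mathcal{M}}$ matches (or is binarised into) whatever fixed alphabet the Section~\ref{sec:Wang B machines simulate NETMs} construction expects, and that the initial configuration of $\mathcal{N}_{\mathcal{M}}$, derived from the input of $\mathcal{M}$, can be laid out on the Wang~B tape within the same linear-time budget. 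With those two checks, Theorem~\ref{thm:Wang efficiently universal} follows immediately from the concatenation of the two simulations.
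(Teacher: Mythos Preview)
Your high-level decomposition is exactly the paper's: Theorem~\ref{thm:Wang efficiently universal} is obtained by composing Lemma~\ref{lem:Non-ersasing-TM} (binary TM $\mapsto$ non-erasing TM in time $O(t^3)$) with Theorem~\ref{thm:Wang B simulates NETM} (non-erasing TM $\mapsto$ Wang~B machine in linear time), and the second stage is just cited as a black box, as you do.

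Where you diverge is in the first stage, and there your sketch has a real gap. You propose to represent each cell of $\mathcal{M}$ by its own growing ``history block'' on $\mathcal{N}_{\mathcal{M}}$'s tape, appending a fresh symbol on every write. But on a single one-dimensional non-erasing tape these blocks must be laid out contiguously, and a block cannot grow without running into its neighbour; you cannot shift later blocks rightward (that would require erasing), and you cannot pre-allocate each block to width $\Theta(t)$ because $\mathcal{N}_{\mathcal{M}}$ does not know $t$. Your space accounting (``each block has length $O(t)$, total $O(t^2)$'') implicitly assumes such pre-allocation. Any fix---doubling block widths and rebuilding the whole layout on overflow, say---amounts to periodically recopying the entire configuration, which is precisely the paper's construction. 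The paper (following Zykin) sidesteps the issue entirely: encode each $\mathcal{M}$-cell by a fixed 3-bit word (head marker plus $\tne{0}=10$ or $\tne{1}=11$), and to simulate one step of $\mathcal{M}$ simply copy the whole $O(t)$-length encoded configuration to fresh tape on the right, editing the one affected triple and the head marker during the copy. Each step then costs $O(t^2)$ back-and-forth scanning, giving $O(t^3)$ overall, and the construction is natively binary so no alphabet-reduction step is needed before feeding it to Theorem~\ref{thm:Wang B simulates NETM}.
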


  In Section~\ref{HTMsimWH} we  give a formal description of
    Hasenjaeger's Turing machine and, for the sake of completeness, we show that
    Hasenjaeger's machine simulates Wang~B machines in linear time. So from Theorem~\ref{thm:Wang efficiently universal} and
    Section~\ref{HTMsimWH}   we
    get that Hasenjaeger's machine is an efficient  polynomial time simulator of Turing machines:

  \begin{theorem}
    \label{thm:Hasenjaegers_UTM_sim_TMs}
    Let $\TM$ be a deterministic Turing machine with a single binary tape that
    computes in time~$t$. Hasenjaeger's universal Turing machine simulates
    the computation of $\TM$ in time $O(t^3)$.
  \end{theorem}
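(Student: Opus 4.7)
The plan is to compose the two simulations already set up elsewhere in the paper. Theorem~\ref{thm:Wang efficiently universal} supplies, for any deterministic single-tape binary Turing machine~$\TM$ running in time~$t$, a Wang~B machine $\WBM_{\TM}$ that simulates $\TM$ in time $T = O(t^3)$. Independently, Section~\ref{HTMsimWH} establishes that Hasenjaeger's universal Turing machine simulates an arbitrary Wang~B machine in linear time, by encoding the Wang~B machine's (finite) table of behaviour onto the program tape~$P$ and its tape contents onto the non-erasing work tape~$W$ (with the counter tape~$C$ providing the auxiliary storage required by Hasenjaeger's four-state, two-symbol control unit).

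The proof I would write simply chains these two facts. First I would instantiate $\WBM_{\TM}$ from Theorem~\ref{thm:Wang efficiently universal} and record that it halts in $T = O(t^3)$ steps after faithfully simulating the computation of $\TM$. Next I would encode $\WBM_{\TM}$ onto the program tape $P$ and its initial tape contents onto the work tape $W$ using precisely the input convention of Section~\ref{HTMsimWH}, and invoke the linear-time simulation there: each step of $\WBM_{\TM}$ takes $O(1)$ steps of Hasenjaeger's machine. Multiplicativity of the two slowdowns then yields a total running time of $O(1) \cdot O(t^3) = O(t^3)$ on Hasenjaeger's machine, which is the bound claimed.

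The main (indeed, only) obstacle is notational rather than algorithmic: I would need to verify that the simulations compose cleanly, i.e.\ that the output/halting convention produced by the construction of Theorem~\ref{thm:Wang efficiently universal} matches the input convention expected by the simulation in Section~\ref{HTMsimWH}, so that the notion of ``simulation'' used in the two results is genuinely transitive. Once those interface conventions are lined up, there is no quantitative work to do and the theorem is immediate by composition.
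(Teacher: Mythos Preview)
Your proposal is correct and matches the paper's approach exactly: the paper does not even give a separate proof for this theorem, but simply states that it follows from Theorem~\ref{thm:Wang efficiently universal} together with the linear-time simulation of Wang~B machines by Hasenjaeger's machine in Section~\ref{HTMsimWH}. Your observation about the interface conventions is apt---and in fact the construction in Theorem~\ref{thm:Wang B simulates NETM} arranges for $\WBM_\NETM$ to end with an $M$ instruction, which is precisely the assumption Section~\ref{HTMsimWH} needs for its halting simulation to work.
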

 
     In Section~\ref{sec:Hooper} we apply Theorem~\ref{thm:Wang efficiently universal}
    to show that a small universal Turing machine of
    Hooper's~\cite{Hooper1963,Hooper1969} is efficiently universal by showing
    that it simulates Turing machines (via Wang~B machines) in polynomial
    time.

  For the remainder of this section we discuss program-size in small universal
  Turing machines.  Hasenjaeger's machine has 4 states and 2 symbols, making
  it a remarkably small universal program. However, it uses 3 non-erasable
  tapes, and so making direct comparisons with other  Turing machine models
  that have small universal programs (but have more or less tapes, tape
  dimensions, etc.)\   is not a straightforward matter. The standard model in
  the small universal Turing machine world consists of a single one
  dimensional tape with one tape head, a deterministic program, and the usual
  notion of a blank symbol~\cite{neary2012complexity}. Other more general
  models use larger numbers of tapes, higher tape dimensions, infinitely
  repeated blank words instead of a repeated blank symbol, and so on, and
  these more general models often have smaller universal programs. In the
  absence of formal tools, namely tight program-size overheads for simulations
  between these models, comparisons between them is at best challenging.
  Glaschick is the most recent author to propose a formula to compare such
  models~\cite{GlaschickINI2012}.

  As an example of the difficulty of comparing different Turing machine
  models, consider one of  Priese's~\cite{Priese1979}  universal machines.
  Priese's universal machine has 2 states, 2 symbols,  a single 2-dimensional
  tape with 2 tape heads, and uses an unconventional technique for ending its computation\footnote{Priese's machine does not end its computation using the standard method of halting on a state-symbol pair that has no transition rule: instead there is a choice, via the initial input encoding, of ending a
computation either by entering a sequence of 6 repeating configurations or by
halting when an attempt is made to move off the edge of the 2D tape.
}. However, for standard 2-state, 2-symbol machines it is known
  that no universal machines exist as their halting problem is
  decidable~\cite{Kudlek1996,Pavlotskaya1973}. So, by generalising aspects of
  the model, Priese found a  machine model whose  smallest universal programs have \emph{strictly} less 
  states and symbols than those of  the standard model. 
  Returning our attention to
  Hasenjaeger's model, we note that while his machine has 3 tapes, the size of
  his program is still impressive when one considers that 2 tapes are read-only and the work tape is non-erasing.
  
For more recent results on small non-erasing Turing machines one can look to the work of Margenstern~\cite{Margenstern1993,Margenstern1997} where he constructs small non-erasing single-tape machines and gives boundaries between universality and non-universality for various parameters in the model. More on the topic of small universal Turing machines can be found in the surveys~\cite{Margenstern2000,neary2012complexity}.
   
  \section{Non-erasing Turing machines simulate Turing machines in time $O(t^3)$}
  \label{sec:nonerasingTM}

  \begin{definition}[Binary Turing machine] \label{def:binTM}
    A binary Turing machine is a tuple $M=(Q,\{0,1\},f,q_{0},q_{|Q|-1})$. Here
    $Q$ and $\{0,1\}$ are the finite sets of states and tape symbols
    respectively. 0 is the blank symbol, $q_{0}\in Q$ is the start state, and
    $q_{|Q|-1} \in Q$ is the halt state. The transition function $f$ is of the form  $f :
    Q\times\{0,1\}\rightarrow \{0,1\}\times\{L,R\}\times Q$ and is 
    undefined on  $\{ q_{|Q|-1}  \} \times \{ 0,1\}$.
  \end{definition}

  We write $f$ as a list of transition rules.
  Each transition rule is a quintuple $(q_{i},x_1,x_2,D,q_{j})$ with
    initial state $q_{i} \in Q$,
    read symbol $x_1 \in \{ 0,1 \}$,
    write symbol $x_2 \in \{ 0,1 \}$,
    move direction $D \in \{ L,R\}$,
    and next state~$q_{j} \in Q$.

  \begin{definition}[Non-erasing Turing machine]
    A non-erasing Turing machine is a binary Turing machine where there are no
    transition rules that overwrite~1 with~0, that is, there is no
    transition rule of the form $(q_{j},1,0,D,q_{k})$, where $q_{j}, q_{k} \in Q$
    and $D \in \{L,R\}$. 
  \end{definition}

  \begin{lemma}{(Zykin~\cite{Zykin1963})}
    \label{lem:Non-ersasing-TM}
    Let $\TM$ be a deterministic single-tape binary Turing machine that runs in time $t$.
    Then there is a deterministic non-erasing Turing machine $\NETM_\TM$ that
    simulates the computation of $\TM$ in time $O(t^3)$. 
  \end{lemma}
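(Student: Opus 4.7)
The key obstacle in Lemma~\ref{lem:Non-ersasing-TM} is that a non-erasing Turing machine cannot implement the write $1\mapsto 0$, yet a general Turing machine may perform this write arbitrarily often. My plan is to use a \emph{copy-on-erase} strategy: $\NETM_\TM$ maintains on its tape an ``active block'' that encodes the current content of $\TM$'s tape, and whenever $\TM$ executes a $1\mapsto 0$ transition, $\NETM_\TM$ copies the active block to a fresh region further to the right, applying the change in-flight during the copy, and leaves the old (obsolete) block in place as irrelevant history.

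To make this work I would encode each cell of $\TM$'s tape by a small fixed-width block of non-erasing cells, chosen so that block boundaries and the boundary between successive obsolete/active regions can be detected by a finite-state scan. The three writes $0\mapsto 0$, $0\mapsto 1$, and $1\mapsto 1$ are all consistent with the non-erasing constraint and are performed in place; only $1\mapsto 0$ triggers a copy. The simulated head position is represented by the physical position of $\NETM_\TM$'s own tape head inside the active block, avoiding the need for a movable ``head marker'' symbol that would itself ever need to be erased.

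One simulated step then proceeds as follows: $\NETM_\TM$ reads the symbol under its head, consults the transition function of $\TM$ hardwired into its finite control, and either (a) writes the new symbol in place and shifts the head one block left or right, or (b) invokes the copy routine, which walks to the left end of the active block, sweeps left-to-right copying the block cell-by-cell into a fresh region further right while injecting the required $1\mapsto 0$ change at the appropriate cell, and finally positions the head at the image of the old head position inside the new active block.

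For the complexity, let $L_i$ denote the length in $\NETM_\TM$-cells of the active block after the $i$th simulated step. Since $\TM$'s tape grows by at most one cell per step, $L_i=O(t)$; executing a step in place costs $O(L_i)$, and each of at most $t$ copies also costs $O(L_i)=O(t)$. To re-locate the active block without storing any pointer across copies, the natural design has $\NETM_\TM$ scan from a fixed left anchor across all previously discarded blocks before each simulated step; since the total discarded material accumulates to $O(t^2)$, this gives $O(t^2)$ work per step and $O(t^3)$ time overall, matching the lemma. The one genuinely delicate design point, and the main obstacle, is to verify that the copy routine itself never needs to overwrite a $1$ by a $0$: I would handle this by using unary-length markers for block and region boundaries (which can only grow) and by separating each new active block from the previous one by a wide band of blanks that serves as a unique, easily recognised beacon during the scan.
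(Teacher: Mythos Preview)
Your copy-on-erase strategy is a valid route to the lemma, and it is genuinely different from the paper's argument: the paper copies the entire encoded configuration on \emph{every} simulated step (not only on $1\mapsto 0$ writes), using a fixed $3$-cell encoding in which one bit per triple explicitly records the head position of $\TM$. That uniformity buys the paper a two-line proof, whereas your in-place handling of the three non-erasing writes could in principle yield a faster simulation when $1\mapsto 0$ writes are rare.

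Two points in your write-up do not hold up, however. First, the claim that a single copy costs $O(L_i)=O(t)$ is wrong: with one tape and finite control you cannot transport an $O(t)$-length block in one sweep; you must shuttle cell by cell between the old and the new region, which is $\Theta(L_i^{2})=\Theta(t^{2})$ per copy (exactly the paper's ``scanning over and back repeatedly''). Your $O(t^{3})$ bound survives only because you bolt on an $O(t^{2})$-per-step ``re-locate from a left anchor'' pass that is in fact unnecessary---after either an in-place step or a completed copy the physical head is already where it should be. The honest budget is simply at most $t$ copies at $O(t^{2})$ each. Second, the claim that you avoid any head marker does not survive the copy phase: once $\NETM_\TM$'s physical head leaves the cell to be changed in order to begin shuttling, it has no way to recognise ``the appropriate cell'' on later passes, nor to know which source cells have already been transported. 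Both issues are easily repaired non-erasingly---drop a one-time mark at the head cell before copying, and use monotone ``already copied'' check-off bits in the old block, which is being discarded anyway---but as written the copy routine is underspecified precisely at the point you yourself flagged as delicate.
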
  
  \begin{proof}
    We give a brief overview of how $\TM$ is simulated by a deterministic non-erasing
       Turing machine $\NETM_\TM$ with a single tape in time $O(t^3)$.
    An arbitrary tape of $\TM$ is encoded for $\NETM_\TM$ as follows.
      Each symbol on the tape of $\TM$ is encoded as three contiguous symbols on the tape of $\NETM_\TM$.
      The two rightmost symbols of each triple encode 0 and 1 as 10 and 01 respectively.
      The leftmost symbol of the triple is 1 if and only if $\NETM_\TM$ is simulating that $\TM$'s tape head is currently reading the symbol encoded by the pair immediately to its right. 
      To simulate a timestep of $\TM$, $\NETM_\TM$ simply makes a new copy of the encoded tape of $\TM$ (to the right of the original), by scanning over and back repeatedly.
      During the copying process the encoded tape contents are appropriately modified to simulate the transition rule of $\TM$.
      This involves simulating the tape head movement of $\TM$ by copying the 1 that encodes the tape head position of $\TM$ to the left of the pair of symbols encoding the new read symbol. 
      If we are simulating a rule where $\TM$ changes a bit under its tape head, then 
       the encoded read symbol (i.e.\ the triple) is appropriately changed by $\NETM_\TM$  as it is being copied.
      The state-changes of $\TM$ can be simulated by state-changes of $\NETM_\TM$ in a straight-forward manner.

    Since $\TM$ runs in time $t$, it uses at most $t$ tape cells.
    Thus, $\NETM_\TM$ takes $O(t^2)$ steps when copying the encoding of an arbitrary configuration of $\TM$ to simulate a single step of $\TM$.
    So $t$ steps of $\TM$ are simulated by $\NETM_\TM$ in time $O(t^3)$.
  \end{proof}

  \section{Wang~B machines}
    \label{sec:Wang B machines simulate NETMs}
    A Wang~B machine is a computing machine with a single non-erasing
    bi-infinite tape that has a binary alphabet~\cite{Wang1957}. Unlike a
    Turing machine, which  performs three operations in a single timestep
    (write a 1 to its tape, move its tape head, and move program control to a
    arbitrary location in its program), a Wang~B machine can  perform only one
    operation at each timestep. Also,  in a Turing machine, control flow can
    jump to an arbitrary program location when reading~0 or~1, but a Wang
    B machine performs a control flow jump only  upon reading~1.
  
  \begin{definition}[Wang~B machine]
    \label{def:Wang B machine}
    A Wang~B machine is a finite list of instructions $\mathcal{W}=I_0,I_1,I_2,\ldots,I_{n-1}$ where each instruction is of one of the following four forms:\\
      \begin{tabular}{cl}
        $L$ &: move tape head left,\\
        $R$ &: move tape head right,\\
        $M$ &: mark the current tape cell by writing the symbol 1,\\
        $J(x)$&: if the current cell contains the symbol 1 then jump to instruction $I_x$,\\
              &\hspace{1em} otherwise move to the next instruction \\ 
      \end{tabular}
  \end{definition}
  Instructions are executed by the machine one at a time, with the 
  computation starting at instruction $I_0$. A left move or right move
  instruction ($I_k\in\{L,R\}$) moves the head one cell to the left or right
  on the tape. A mark instruction ($I_k=M$)  marks the tape: if a cell is 0
  (unmarked)  it becomes 1 (marked), otherwise if a cell is~$1$ it stays
  as~$1$.  For a jump instruction, $I_k=J(x)$, where $0 \leq x \leq n -1$, if
  the current tape cell is 1 then the machine jumps to instruction~$I_x$.
  Alternatively, when $I_k=J(x)$ and the current cell is 0 the machine will
  either move to the next instruction $I_{k+1}$ if $k<n-1$, or it will halt if
  $k=n-1$.  After each move or mark instruction $I_k$, the machine either
  moves to the next instruction $I_{k+1}$ if $k<n-1$, or halts if $k=n-1$.

    \subsection{Wang's B machines simulate non-erasing Turing machines in linear time}

  \begin{theorem}
    \label{thm:Wang B simulates NETM}
    Let $\NETM$ be a deterministic non-erasing Turing
    machine with a single binary tape that runs in time~$t$. Then there is
    a Wang~B machine $\mathcal{W}_\NETM$ that simulates the computation of
    $\NETM$ in time $O(t)$. 
  \end{theorem}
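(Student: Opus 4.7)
The plan is to give a direct step-by-step simulation that fires $O(1)$ Wang B instructions per $\NETM$-step. I would encode each $\NETM$ tape cell as a pair of adjacent Wang B cells laid out as $\ldots M\,C\,M\,C\,M\,C\ldots$, where each \emph{marker} cell $M$ is invariantly $1$ and each \emph{content} cell $C$ carries the corresponding $\NETM$ bit. The simulation invariant is that the Wang B head sits on the content cell of the cell currently under $\NETM$'s head. The initial Wang B tape is obtained by interleaving $1$'s between the bits of $\NETM$'s initial tape.

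For each state $q$ of $\NETM$ I would create a short code block $B_q$ of constant size, preceded by a single-instruction preamble $B_q^{pre}=R$ whose only job is to carry the head from a marker back onto the content cell after an inter-block jump. The block itself begins with a $J(\gamma_q)$ that branches on the content bit into a $(q,0)$ sub-block (fall-through) and a $(q,1)$ sub-block (at label $\gamma_q$); each sub-block first executes the $\NETM$ write ($M$ if the rule writes $1$ from $0$; nothing in the $(q,1)$ case since the cell is already $1$), then moves the head via $R\,M\,J(B_{q'}^{pre})$ for a right $\NETM$-step, or via $L\,L\,L\,M\,J(B_{q'}^{pre})$ for a left $\NETM$-step. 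The move pattern is chosen so that in both directions the head lands on the marker immediately to the left of the new content cell, which is why the same preamble $R$ works for every block. The halt state's block is placed at the very end of the Wang B program so that control simply falls off the list, causing $\WBM_\NETM$ to halt as in Definition~\ref{def:Wang B machine}.

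The key subtlety — and the place I expect to spend the most care — is that Wang B machines have no unconditional jump: $J$ branches only on reading $1$. This is exactly what the marker cells are for: by always parking the head on a marker and executing an $M$ immediately before each $J$, the target cell is guaranteed to hold $1$, so the jump fires unconditionally. Markers are never overwritten with $0$ because Wang B is non-erasing, and every marker actually used by the simulation is set lazily by the $M$'s preceding the $J$'s, so no initial pre-marking beyond the interleaved input encoding is needed. The remaining work is a routine case analysis over the four shapes an $\NETM$-rule can take, verifying that the tape layout invariant is preserved and that each simulated $\NETM$-step consumes at most a fixed constant number of Wang B instructions. Summing over the $t$ steps of $\NETM$ then gives the claimed $O(t)$ bound.
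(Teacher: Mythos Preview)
Your proposal is correct and is essentially the same construction as the paper's: the paper also encodes each $\NETM$ cell as a marker/content pair ($\tne{0}=10$, $\tne{1}=11$), begins each state's block with $R$ followed by a conditional $J$ on the content bit, and places an $M$ immediately before every inter-block $J$ so that the jump is guaranteed to fire (and so that markers at the tape frontier are created lazily). The only cosmetic difference is that the paper pads every state's block to exactly 13 instructions so jump targets can be written arithmetically as $13j$ rather than as symbolic labels.
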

  
  \begin{proof}
    We begin by giving the program for the Wang~B machine $\WBM_\NETM$
      followed by the encoding it uses to simulate $\NETM$.
    We then show that $\WBM_\NETM$ simulates each  transition rule in $\NETM$ in constant time, 
      and so simulates the computation of $\NETM$ in time $O(t)$. 

\subsubsection{Encoding}
    Let $\tne{TR_{q_i,\sigma_1}}$ denote a sequence of Wang~B machine instructions
      that encode the transition rule
        $TR_{q_i,\sigma_1}= (q_i,\sigma_1,\sigma_2,D,q_j)$ from $\NETM$ 
          where  $q_i, q_j \in Q$, 
            $\sigma_1, \sigma_2 \in \{0,1\}$ and
            $D \in \{R,L\}$.
    The sequence of instructions for~$\WBM_\NETM$~is 
    \begin{align}
      \label{eq:W_N}
       \WBM_\NETM = & \; R,J(8),\tne{TR_{q_0,0}},\tne{TR_{q_0,1}},\notag \\
        &\hspace{0.5cm}R,J(21),\tne{TR_{q_1,0}},\tne{TR_{q_1,1}},\notag \\
          &\hspace{3cm} \vdots \notag\\
            &\hspace{1cm} R,J(13i+8),\tne{TR_{q_i,0}},\tne{TR_{q_i,1}}, \\
              &\hspace{3.5cm} \vdots \notag\\ 
                &\hspace{1.5cm} R,J(13(|Q|-2)+8),\tne{TR_{q_{|Q|-2},0}}\tne{TR_{q_{|Q|-2},1}},M \notag
    \end{align} 
    where $|Q|$ is the number of states in $\NETM$,
      and $\tne{TR_{q_i,0}}$ and $\tne{TR_{q_i,1}}$ are the instruction sequences given by 
        Equations~\eqref{eq:instructions sequence for read 0 transition rule} 
        and~\eqref{eq:instructions sequence for read 1 transition rule}. 
      
    We now define Equations~\eqref{eq:instructions sequence for read 0 transition rule}
        and~\eqref{eq:instructions sequence for read 1 transition rule}
      which give the sequence of instructions used to simulate each transition rule.
    \begin{equation}\label{eq:instructions sequence for read 0 transition rule}
        \tne{TR_{q_i,0}} =
          \begin{cases} 
            R,M,M,M,M,J(13j) & \text{if\quad} TR_{q_i,0}=(q_i,0,0,R,q_j) \\
            L,L,L,M,M,J(13j) & \text{if\quad} TR_{q_i,0}=(q_i,0,0,L,q_j) \\
            M,R,M,M,M,J(13j) & \text{if\quad} TR_{q_i,0}=(q_i,0,1,R,q_j )\\
            M,L,L,L,M,J(13j) & \text{if\quad} TR_{q_i,0}=(q_i,0,1,L,q_j) 
          \end{cases}
      \end{equation}
      \begin{equation}\label{eq:instructions sequence for read 1 transition rule}
        \tne{TR_{q_i,1}} =
          \begin{cases} 
            R,M,M,M,J(13j) & \text{if\quad} TR_{q_i,1}=(q_i,1,1,R,q_j) \\
            L,L,L,M,J(13j) & \text{if\quad} TR_{q_i,1}=(q_i,1,1,L,q_j )
          \end{cases}
      \end{equation}   
      (There are only two cases for $\tne{TR_{q_i,1}}$ as non-erasing machines never overwrite~1 with~0.)
      
    We encode the symbols 0 and 1 of $\NETM$ as $\tne{0}=10$ and $\tne{1}=11$ respectively.
    An arbitrary configuration of $\NETM$ is given by 
    \begin{equation}\label{eq:non-erasing Turing machine tape}
      q_i,\qquad w_{0}\,w_{1}\,\ldots\,w_{j-1}\,\underline{w_j}\,w_{j+1}\, \ldots\,w_{n-1}
    \end{equation}
      where $q_i$ is the current state, $w_{0}\ldots w_{n-1}$ is the tape contents, $w_k \in \{ 0,1\}$ and the tape head position is given by an underline. The configuration in Equation~\eqref{eq:non-erasing Turing machine tape} is encoded as the B machine tape 
    \begin{equation}
      \label{eq:encoded non-erasing Turing machine tape}
      I_{13i},\qquad\tne{w_{0}}\,\tne{w_{1}}\,\ldots\,\tne{w_{j-1}}\,\underline{\bm{w_{j_1}}}\bm{w_{j_2}}\,\tne{w_{j+1}} \ldots\,\tne{w_{n-1}} 
    \end{equation}
    where $\tne{w_k}\in \{\tne{0},\tne{1}\}$, the encoded read symbol $w_{j_1}w_{j_2}=\tne{w_j}$ is given in bold, and $I_{13i}$ is the next instruction to be executed and encodes that $\NETM$ is in state $q_i$. Note that $I_{13i}$ is the first instruction in the sequence $I_{13i},\ldots, I_{13i+12}=R,J(13i+8),\tne{TR_{q_i,0}},\tne{TR_{q_i,1}}$ that encodes the pair of transition rules for state~$q_i$.
 
The infinite number of blank tape cells of $\NETM$ each contain the symbol $0$, as do the blank tape cells of $\WBM_\NETM$.  Note that, during the simulation, $\WBM_\NETM$ may need to simulate the situation where the tape head of $\NETM$ moves to a blank tape cell. In this case, as described below, the simulator $\WBM_\NETM$ will move to the relevant blank portion of its own tape and convert the symbol pair $00$ to $10 = \tne{0}$.

    \subsubsection{Simulating transition rules}\label{sec:simulatingTRs} 
     At the start of each simulated timestep of machine $\NETM$, our Wang machine $\WBM_\NETM$ has a configuration of the form given in Equation~\eqref{eq:encoded non-erasing Turing machine tape}. Each simulated timestep begins with $\WBM_\NETM$ choosing which transition rule to simulate by reading the encoded read symbol and then choosing which sequence ($\tne{TR_{q_i,0}}$ or $\tne{TR_{q_i,1}}$) to execute. 
       
    From Equation~\eqref{eq:encoded non-erasing Turing machine tape}, each simulated timestep begins with the tape head over the leftmost symbol of the encoded read symbol ($\tne{0}=10$ or $\tne{1}=11$). So, immediately after we execute the first instruction (which is $I_{13i}=R$) the tape head is over the rightmost symbol of $\tne{0}$ or $\tne{1}$ and the program control is at instruction $I_{13i+1} = J(13i+8)$. If we are reading $\tne{0} = 10$, then the rightmost symbol is a 0 and so no jump occurs on $J(13i+8)$. This means that control moves to instruction $I_{13i+2}$, the leftmost instruction in $\tne{TR_{q_i,0}}$. Alternatively, if we are reading $\tne{1} = 11$, then the rightmost symbol is a 1 and $J(13i+8)$ will jump to instruction  $I_{13i+8}$, sending control to the leftmost instruction of $\tne{TR_{q_i,1}}$.
    (To see this, use Equation~\eqref{eq:W_N} to count the number of
    instructions that precede $\tne{TR_{q_i,1}}$, which gives $13i+8$,
    specifically 13 instructions for each state $q_j$ where $j<i$ and a
    further 8 instructions for the sequence $R,J(13i+8),\tne{TR_{q_i,0}}$.) 
        
    We now explain how the sequences in Equations~\eqref{eq:instructions sequence for read 0 transition rule} and~\eqref{eq:instructions sequence for read 1 transition rule} simulate the transition rules of $\NETM$. 

      \paragraph{Case 1. Read symbol of $\NETM$ is 1}
      As described at the beginning of Section~\ref{sec:simulatingTRs}, the simulation of each timestep begins with the execution of  $R,J(13i+8)$. When the read symbol of $\NETM$ is 1, and the pair of instructions $R,J(13i+8)$ have  executed, we have the following tape contents for $\WBM_\NETM$
      \begin{equation}\label{eq:initial configuration transition rule simulation read 1}
        \tne{w_{0}}\,\tne{w_{1}}\,\ldots\,\tne{w_{j-2}}\,10\;\bm{1}\underline{\bm{1}}\;\tne{w_{j+1}} \ldots\,\tne{w_{n-1}} 
      \end{equation}
      (For illustration purposes, we assume that in $\NETM$ the symbol to the left of the read symbol is a $0$, which is encoded as $\tne{0}=10$ in Equation~\eqref{eq:initial configuration transition rule simulation read 1}.)
      
         As described at the beginning of Section~\ref{sec:simulatingTRs}, when the read symbol of $\NETM$ is~$1$ the execution of $R,J(13i+8)$ is followed by the execution of the sequence $\tne{TR_{q_i,1}}$. If we are simulating $(q_i,1,1,L,q_j)$, then from Equation~\eqref{eq:instructions sequence for read 1 transition rule} the instruction  sequence $\tne{TR_{q_i,1}}=L,L,L,M,J(13j)$ is applied to the tape in Equation~\eqref{eq:initial configuration transition rule simulation read 1} to give
  \begin{equation}\label{eq:final configuration transition rule simulation read 1}
  \tne{w_{0}}\,\tne{w_{1}}\,\ldots\,\tne{w_{j-2}}\,\underline{\bm{1}}\bm{0}\;11\;\tne{w_{j+1}} \ldots\,\tne{w_{n-1}} 
  \end{equation}
  The tape in Equation~\eqref{eq:final configuration transition rule simulation read 1}
  is of the form  in Equation~\eqref{eq:encoded non-erasing Turing machine tape},
  hence  the tape configuration is  ready for the simulation of the next
  timestep. The jump instruction~$J(13j)$ sent the program control
  of~$\WBM_\NETM$ to the first instruction of the sequence of instructions that
  encodes state $q_j$. This is verified by counting the number of instructions
  to the left of $R,J(13j+8),\tne{TR_{q_j,0}},\tne{TR_{q_j,1}}$ using the same
  technique as above.  So, the simulation of the transition rule
  $(q_i,1,1,L,q_j)$ is complete.

To generalise this example to all possible cases for simulating a rule of the form $(q_i,1,1,L,q_j)$ we need only consider the encoded symbol (from $\NETM$) immediately to the left of the encoded  read symbol (in our analysis $i,j$ are already arbitrary). If the encoded symbol to the left of the tape head in Equation~\eqref{eq:initial configuration transition rule simulation read 1} was $\tne{1}=11$ instead of $\tne{0}=10$, then it is verified in the same straightforward manner. If we are simulating the situation where $\NETM$ is at the left end of its tape (the tape is blank to the left: all $0$s) and so contains the pair $00$ immediately to the left of the encoded read symbol in Equation~\eqref{eq:initial configuration transition rule simulation read 1}. This $00$ pair is changed to $\tne{0}=10$ by the $M$ instruction that immediately proceeds the $J(13j)$ instruction, correctly providing the symbol pair $10$ that encodes the $0$ as $\tne{0} = 10$ for the next simulated timestep. Also, the 1 printed by this $M$ instruction allows instruction $J(13j)$ to jump the program control to the encoding of the next state $q_j$.

  The case of simulating $(q_i,1,1,R,q_j)$ is verified by applying the sequence $\tne{TR_{q_i,1}}=R,M,M,M,J(13j)$ from Equation~\eqref{eq:instructions sequence for read 1 transition rule} to the tape in Equation~\eqref{eq:initial configuration transition rule simulation read 1}. This analysis is similar to the previous example and so we omit the details. 
  We simply note that after the first of the three $M$ instructions is executed, the tape cell will always contain a 1 and so the second and third $M$ instructions do not change the tape. (These extra $M$ instructions are used for padding so that each encoded state has exactly 13 instructions.)

  \paragraph{Case 2. Read symbol of $\NETM$ is 0} 
      As described at the beginning of Section~\ref{sec:simulatingTRs}, the
      simulation of each timestep begins with the execution of  $R,J(13i+8)$.
      When the read symbol of $\NETM$ is 0, and the pair of instructions
      $R,J(13i+8)$ have been executed, we have the following tape contents for
      $\WBM_\NETM$.
  \begin{equation}\label{eq:initial configuration transition rule simulation read 0}
  \tne{w_{0}}\,\tne{w_{1}}\,\ldots\,\tne{w_{j-1}}\;\bm{1}\underline{\bm{0}}\;11\;\tne{w_{j+2}} \ldots\,\tne{w_{n-1}} 
  \end{equation}
  (For illustration purposes, we assume that in $\NETM$ the symbol to the right of the read symbol is a $1$, which is encoded as $\tne{1}=11$ as in Equation~\eqref{eq:initial configuration transition rule simulation read 0}.)        
        
        As described at the beginning of Section~\ref{sec:simulatingTRs}, when the read symbol of $\NETM$  is $0$ the execution of $R,J(13i+8)$ is followed by the execution of the sequence $\tne{TR_{q_i,0}}$. If we are simulating $(q_i,0,1,R,q_j)$, then from Equation~\eqref{eq:instructions sequence for read 0 transition rule} the sequence $\tne{TR_{q_i,0}}=M,R,M,M,M,J(13j)$ is applied to the tape in Equation~\eqref{eq:initial configuration transition rule simulation read 0} to give
  \begin{equation}\label{eq:final configuration transition rule simulation read 0}
  \tne{w_{0}}\,\tne{w_{1}}\,\ldots\,\tne{w_{j-1}}\;11\;\underline{\bm{1}}\bm{1}\;\tne{w_{j+2}} \ldots\,\tne{w_{n-1}}
  \end{equation}
  The first $M$ instruction changed  $\tne{0}=10$ to $\tne{1}=11$ simulating the printing of the write symbol by $\NETM$. 
  The tape in Equation~\eqref{eq:final configuration transition rule simulation read 0} is of the form found in Equation~\eqref{eq:encoded non-erasing Turing machine tape} and is ready for the simulation of the next transition rule to begin. The jump instruction $J(13j)$ sends the program control of $\WBM_\NETM$ to the instruction sequence of the program that encodes state $q_j$. This is verified using the same technique as in the previous case. So, the simulation of $(q_i,0,1,R,q_j)$ is complete.

  To generalise this example to all possible cases for simulating a rule of the form $(q_i,0,1,R,q_j)$, we need only consider the encoded symbol (from $\NETM$) immediately to the right of the encoded read symbol (in our analysis $i,j$ are already arbitrary). If the encoded symbol to the right of the tape head in Equation~\eqref{eq:initial configuration transition rule simulation read 0} was $\tne{1}=10$ instead of $\tne{1}=11$, then it is verified in the same straightforward manner.  If we are simulating the situation where $\NETM$ is at the right end of its tape  (the tape is blank to the right: all $0$s) and so contains the pair $00$ immediately to the right of the encoded read symbol in Equation~\eqref{eq:initial configuration transition rule simulation read 0}. This $00$ pair is changed to $\tne{0}=10$ by the second $M$ in the sequence  $M,R,M,M,M,J(13j)$ which provides the symbol pair $10 =  \tne{0}$ that correctly encodes a $0$ for the next simulated timestep.  Also, the 1 printed by this $M$ instruction allows instruction $J(13j)$ to jump the program control to the encoding of the next state $q_j$. As with the previous case, the extra~$M$ instructions are added for padding.
    
  The other cases for simulating $\NETM$ reading a $0$ are verified by applying the appropriate sequences from Equation~\eqref{eq:instructions sequence for read 0 transition rule} to the tape in Equation~\eqref{eq:initial configuration transition rule simulation read 0}. The details are similar to the previous example and are omitted. 
  
  \subsubsection{Halting and time complexity.}
        When $\NETM$ enters its halt state, defined to be state $q_{|Q|-1}$ in Definition~\ref{def:binTM}, then~$\WBM_\NETM$ executes the jump instruction $J(13(|Q|-1))$ and jumps to the rightmost instruction in Equation~\eqref{eq:W_N}, an $M$ instruction. Note that in order to jump to this $M$ instruction we must have read a 1 on the tape, and so this $M$ instruction does not change the tape. 
        After executing this $M$ instruction, $\WBM_\NETM$ is at the end of its list of instructions and so it halts. 
        
        From Equations~\eqref{eq:W_N},~\eqref{eq:instructions sequence for read 0 transition rule} and~\eqref{eq:instructions sequence for read 1 transition rule}, exactly 13 instructions are used to encode the pair of transition rules for each state $q_i$ of $\NETM$. Furthermore, from the above algorithm, the simulation of one of these transition rules involves the execution of at most 8 instructions (at most 8 timesteps).  Thus $\WBM_\NETM$  simulates~$t$ steps of an arbitrary non-erasing Turing machine~$\NETM$ in time $O(t)$. 
    \end{proof}

  \section{Hasenjaeger's electromechanical universal Turing machine}
  \label{HTMsimWH}  
 We begin this section by briefly describing the electromechanical device constructed by Hasenjaeger~\cite{Hasenjaeger1987}, which implements a multi-tape Turing machine.   As mentioned in Section~\ref{sec:intro}, Glaschick~\cite{GlaschickINI2012} reverse engineered the physical wiring of Hasenjaeger's electromechanical machine to find the Turing machine program left by the previous programmer, presumably Hasenjaeger, and with the help of Hasenjaeger's notes saw that it simulates Wang~B machines. For completeness we include a proof that this program (wiring) for Hasenjaeger's machine simulates Wang~B machines in linear time.\footnote{Note that the machine  can be re-programmed by re-wiring.}
   
    First we briefly describe Hasenjaeger's electromechanical machine, which is shown in Figure~\ref{fig:Hasenjaeger's UTM}.  
    
    \begin{itemize}
      \item The \emph{control unit} is constructed from 16 electromechanical relays which 
        encode the \emph{main program} (also called the state table) of the Hasenjaeger machine.  
        This unit is limited to 4 states and operates on three tapes. 
      \item The \emph{program tape} ($P$) is a device consisting of 
        20 switches,  18 of which are connected, and together represent  
          a cyclic, bi-directional, read-only binary tape with 18 cells. (This short tape can be used to store a simulated program.)   
      \item The \emph{counter tape} ($C$) consists of two selector switches that represent a bi-directional,
        cyclic, read-only tape with 18 cells. 
        It represents a tape where all cells contain a 1 except for a single cell that contains a 0. 
      \item The \emph{work tape} ($W$) is a bi-directional non-erasable ``infinite'' tape.%
        \footnote{It is expected that the recent precipitous decline in the production of 
        35mm film and 
        paper punch tape will negatively impact the computing power of Hasenjaeger's machine.}
    \end{itemize}
    
    Hasenjaeger's electromechanical device, as wired, is an instance of a Turing machine. 
    However, exactly what kind of Turing machine is a matter of opinion:
      there are a number of reasonable 
      generalizations of this single device (machine instance) 
      to get a general model of computation, here we give one.
    Formally, we write the tuple $ (Q, f, q_s)$ to denote an instance of a
      three-tape Turing machine of the following form. 
    The three  tapes are bi-directional and are denoted $P$, $C$ and $W$.
    Each tape has alphabet~$\{0,1\}$ and blank symbol~$0$. 
    Tapes $P$ and $C$ are read-only, while $W$ is non-erasing 
      (i.e.\ 1s can not be overwritten with 0s).
    To give an instance of such a machine, 
      we would assign values to the tuple $(Q, f, q_s)$,
        where $Q$ is a set of states,
        $f$ is a transition function (or transition  table), of the form 
      $f : Q \times 
          \{ 0,1 \} \times
          \{ 0,1 \} \times 
          \{ 0,1 \}
        \rightarrow
          \{ L,R,\nopTape \} \times
          \{ L,R,\nopTape \} \times 
          \{ L,R,\nopTape,1\} \times
          Q$,
      and $q_s \in Q$ is  the start state.

    The machine works as follows.
    In state $q \in Q$,
      the machine reads a symbol from each of the tapes $P$, $C$, and $W$ and,
        as dictated by $f$, for each tape does one of three things: 
          move left ($\lftTape$),
          move right ($\rgtTape$),
          do nothing ($\nopTape$).
      However, for the tape $W$ 
        it has an additional fourth option of 
          marking ($\mrkTape$) the tape cell with the symbol~$1$.

    Now we formally specify Hasenjaeger's machine  $\HM = (Q, f, q_s)$  as an instance of the above model. 
    $\HM$ has four states $Q =\{ q_1,q_2,q_3,q_4 \}$
      and the start state is $q_s = q_1$.
    The function $f$ is given as a list of transition rules in Table~\ref{fig:Hrules}. 
    This table of behaviour is derived from the wiring of the 
      electromagnetic relays of Hasenjaeger's  device.
    
  \begin{table}[ht] 
    \begin{equation*}
      \begin{array}{c|cccc|cccc}
      \text{Rule } &&&&&&&& \vspace{-0.9ex}\\
      \text{number } & Q & P & C & W & P  & C   & W        & Q'\\
      \hline
      1 & q_1 & 1 & * & 0 & \rgtTape & \nopTape & 1        & q_1 \\
      2 & q_1 & 1 & * & 1 & \rgtTape & \nopTape & \nopTape & q_1 \\
      3 & q_1 & 0 & * & * & \rgtTape & \nopTape & \nopTape & q_2 \\
      \hline                                                     
      4 & q_2 & 1 & 0 & * & \rgtTape & \nopTape & \rgtTape & q_1 \\
      5 & q_2 & 0 & 0 & * & \rgtTape & \rgtTape & \nopTape & q_2 \\
      6 & q_2 & 1 & 1 & * & \rgtTape & \lftTape & \lftTape & q_1 \\
      7 & q_2 & 0 & 1 & * & \rgtTape & \lftTape & \nopTape & q_3 \\
      \hline                                                     
      8 & q_3 & 0 & * & 0 & \rgtTape & \nopTape & \nopTape & q_3 \\ 
      9 & q_3 & 1 & * & 0 & \rgtTape & \nopTape & \nopTape & q_1 \\
     10 & q_3 & 0 & * & 1 & \rgtTape & \rgtTape & \nopTape & q_3 \\
     11 & q_3 & 1 & * & 1 & \lftTape & \rgtTape & \nopTape & q_4 \\ 
      \hline                                                     
     12 & q_4 & 0 & 1 & * & \lftTape & \nopTape & \nopTape & q_4 \\
     13 & q_4 & 1 & 1 & * & \lftTape & \lftTape & \nopTape & q_4 \\
     14 & q_4 & * & 0 & * & \rgtTape & \nopTape & \nopTape & q_1 \\
      \end{array}
    \end{equation*}
    \caption{The program $f$ for Hasenjaeger's universal machine $\HM$ that simulates Wang's B machines.
    The $*$ symbol denotes that the read symbol can be 0 or 1.
    The  rule numbers on the left are not part of the program.}\label{fig:Hrules}
  \end{table}
    
  \begin{lemma}
    Let $\WBM$ be a Wang~B machine that runs in time $t$. 
    The multitape Turing machine $\HM$, defined above, simulates the computation of $\WBM$ in time $O(t)$.
  \end{lemma}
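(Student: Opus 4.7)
The plan is to define a prefix-code encoding of the four Wang~B instruction types on the program tape $P$, initialise the counter tape $C$ and work tape $W$ appropriately, and then verify by cases that $\HM$ simulates each Wang~B step using a bounded number of transitions from Table~\ref{fig:Hrules}. I would encode the instructions $M$, $R$, $L$, $J(x)$ of $\WBM$ on $P$ as the codewords $1$, $01$, $001$ and $000\,0^{m}\,1$ respectively, where $m$ is chosen so that $m+1$ counts the number of codewords between $J(x)$ and $I_x$ when $P$ is scanned leftwards; this is always realisable because each codeword contains exactly one $1$ (placed at its right end) and $P$ has length~$18$. The initial tape of $\WBM$ is placed directly on $W$ (this is compatible because both tapes are non-erasing) with the $W$-head aligned with the $\WBM$-head, and the $C$-head starts on its unique $0$-cell. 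I will maintain the invariant that at the start of each simulated Wang~B step, $\HM$ is in state $q_1$ with the $P$-head on the first symbol of the next codeword and the $C$-head on its $0$-cell.

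Next I would check the cases $M$, $R$ and $L$, which are handled by states $q_1$ and $q_2$. Reading $P=1$ in $q_1$ (rules~1--2) executes the mark on $W$ and advances $P$. Reading $P=0$ in $q_1$ (rule~3) consumes the leading $0$ of an $R$, $L$ or $J$ codeword and hands off to $q_2$; from there, rules~4--7 branch on the current value of $C$: rule~4 completes $R$ by moving $W$ right; rule~5 consumes the second $0$ of $L$ or $J$ while rotating $C$ off its $0$-cell; rule~6 completes $L$ by moving $W$ left and restoring $C$; rule~7 consumes the third $0$ of $J$, restores $C$, and delegates to $q_3$. In every non-$J$ case the $P$-head ends on the first symbol of the next codeword with the invariant re-established.

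The main obstacle is the $J(x)$ case, which must physically relocate the $P$-head to the start of $I_x$. In state $q_3$ the head scans the suffix $0^{m}\,1$: if $W=0$ (jump not taken) rules~8--9 simply advance $P$ past the suffix and return to $q_1$; if $W=1$ (jump taken) rules~10--11 perform the same scan but rotate $C$ rightwards once per symbol, so that on reaching the terminating $1$ the counter has been displaced by $m+1$ cells and control passes to $q_4$. In $q_4$, rule~12 sweeps $P$ leftwards over $0$s, rule~13 rotates $C$ one step leftwards for each $1$ encountered, and rule~14 fires as soon as $C$ returns to its $0$-cell, moving $P$ one step right and returning control to $q_1$. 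Because each codeword contains exactly one $1$, crossing $m+1$ ones in reverse skips $m+1$ complete codewords, placing the $P$-head on the first symbol of $I_x$ by the choice of $m$ and re-establishing the invariant.

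Finally, every simulated Wang~B step uses at most as many $\HM$-steps as the length of the longest codeword in $P$, which is bounded by the fixed tape length~$18$; therefore $t$ steps of $\WBM$ are simulated in $O(t)$ steps of $\HM$, as required.
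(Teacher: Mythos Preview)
Your approach is exactly the paper's: the same prefix encoding $\tne{M}=1$, $\tne{R}=01$, $\tne{L}=001$, $\tne{J(x)}=0000^{m}1$, the same invariant, and the same case split across rules~1--14. Two points need fixing.

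First, the jump-taken case is off by one. After the $(m{+}1)$th application of rule~13 the $C$-head is back on its $0$-cell, but the $P$-head has just stepped left off the $1$ terminating $\tne{I_{x-1}}$; rule~14 then moves $P$ right once, landing on that terminating~$1$, \emph{not} on the first symbol of $\tne{I_x}$. The paper notes that one further transition is needed---rule~2, since we are now in $q_1$ with $P=1$ and $W=1$---to advance $P$ one more cell to the start of $\tne{I_x}$. This extra step is harmless because rule~2 leaves $W$ unchanged, but without it your invariant is not re-established after rule~14 as you claim.

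Second, you do not address halting. A Wang~B machine halts by falling off the end of its instruction list, but $\HM$ has no halt state; the paper handles this by appending a self-looping $\tne{J(n)}$ codeword to $P$ (and requiring $I_{n-1}=M$ so that $W$ reads~$1$ when this codeword is reached), so that halting is signalled by a bounded cycle of configurations. Also, your per-step time bound is understated: the backward scan in $q_4$ may traverse all of $P$, not just the current codeword, so a simulated jump costs $O(|P|)$ rather than the length of the longest codeword. This is still a constant independent of~$t$, so the $O(t)$ conclusion survives, but the bound you state is not the right one.
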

  \begin{proof}
      We begin by giving the encoding used by the program $\WBM$, followed by
      a description of how the program simulates each of the four Wang~B
      machine instructions as well as halting. We finish by giving the time
      analysis for this simulation.
    \paragraph{Encoding}
      The four Wang~B machine instructions $M$, $R$, $L$ and $J(x)$ are encoded as binary words as follows:
      $\tne{M} = 1$, $\tne{R} = 01$, $\tne{L} = 001$, and $\tne{J(x)} = 0000^y1$ 
      (the value $y\in  \{ 0,1,2,\ldots \}$ will be defined later).
      The Wang~B machine program $\WBM = I_0,I_1,\ldots , I_{n-1}$ 
      is encoded as a single binary word via Equation~\eqref{eq:Wang machine to Hasenjaeger UTM}. 
 
      \begin{equation}\label{eq:Wang machine to Hasenjaeger UTM}
        \tne{\WBM} = \tne{I_0}\tne{I_1}\tne{I_2}\ldots\tne{I_{n-2}}\tne{I_{n-1}}\tne{J(n)}
      \end{equation}
     
      The word $\tne{\WBM} \in \{ 0,1\}^*$ is placed on $\HM$'s circular
      program tape~$P$.  The $C$ tape is  defined to have length $n+2$, with
      $n+1$ of these cells containing the symbol~$1$, and the single remaining
      cell containing the symbol $0$. The $W$ tape has the same tape contents
      as that of  the Wang~B machine it simulates. At the beginning of a
      simulated computation step the tape head of $P$ is over the leftmost
      symbol of the encoded instruction it is simulating, $C$'s tape head is
      over its single $0$ symbol, and the tape head of $W$ has the same
      location as the tape head of the Wang~B machine it simulates.    
    
    To help simplify our explanation, we give partial configurations for $\HM$
    where we display a small part of each tape surrounding the tape head. For
    example, the following  configuration occurs at the beginning of a
    simulated computation step
	  \begin{xalignat*}{4}
	    q_1 &  
          & P = \ldots 1\, \underline{0}01\ldots &
          & C = \ldots 1\underline{0}1\ldots &
          & W = \ldots 1\underline{0}0\ldots 
    \end{xalignat*}
    Here, $\HM$'s current state is $q_1$  and the position of each of the
    three tape heads is given by an underline.  Also, in the above example
    the tape head of $P$ is over the leftmost symbol of an encoded left move
    instruction $\tne{L}=001$, and the $C$ tape head is at cell~$C_0$. 
    
   \paragraph{Simulate $M$ instruction} 
       The Wang~B machine $M$ instruction is encoded as $\tne{M} = 1$ on the $P$ tape. 
       If the tape head of  $W$ is reading a 0 then we have a configuration of the form
      	  \begin{xalignat*}{4}
	    q_1&  & P = \ldots 1\, \underline{1} \, 001 \ldots & & C = \ldots  1\underline{0}1\ldots & &W = \ldots  1\underline{0}0\ldots 
          \end{xalignat*}
       (For the purposes of explanation, we have assumed that there is an encoded $L$ instruction, given by $\tne{L} = 001$, to the right of $\tne{M} =1$ on the $P$ tape.) Rule 1 from Table~\ref{fig:Hrules} is applied to the above configuration to give
      	  \begin{xalignat*}{4}\label{configmovedone}
	    q_1&  & P = \ldots 1\, 1 \, \underline{0}01 \ldots & & C = \ldots  1\underline{0}1\ldots & &W = \ldots  1\underline{1}0\ldots 
          \end{xalignat*}
          The $M$ instruction was simulated by printing a 1 to the $W$ tape. Note that the tape head on the $P$ tape has moved to the leftmost symbol of the next encoded instruction ($\tne{L} = 001$), and the current state of $\mathcal{H}$ is once again $q_1$. So the simulation of the $M$ instruction is complete and $\mathcal{H}$ is configured to begin simulation of the next Wang machine instruction. 
          
          In the case where the tape head of $W$ is reading a $1$, we simulate the $M$ instruction by executing rule 2 from Table~\ref{fig:Hrules}. This is very similar to the previous case above and so we omit the detail.

    \paragraph{Simulate $R$ instruction} The Wang~B machine \emph{right move} instruction is encoded as $\tne{R} = 01$ on the $P$ tape. 
   If the tape head of $W$ is reading a 0 then we have a configuration of the form
        \begin{xalignat*}{4}
	    q_1&  & P = \ldots 1\, \underline{0}1 \, 001 \ldots & & C = \ldots  1\underline{0}1\ldots & &W = \ldots  1\underline{0}0\ldots 
          \end{xalignat*}
      Rules 3 and 4 from Table~\ref{fig:Hrules} are applied to the above configuration to give
        \begin{xalignat*}{4}
	    q_1&  & P = \ldots 1\, 01 \, \underline{0}01 \ldots & & C = \ldots  1\underline{0}1\ldots & &W = \ldots  10\underline{0}\ldots 
          \end{xalignat*}
          The tape head of $W$ was moved one place to the right to simulate the $R$ instruction. Also, the tape head on the $P$ tape has moved right 2 places to the leftmost symbol of the next encoded instruction ($\tne{L} = 001$), and the current state of $\mathcal{H}$ is once again $q_1$. So the simulation of the $R$ instruction is complete and $\mathcal{H}$ is configured to begin simulation of the next Wang machine instruction. In the case where the tape head of $W$ is reading a $1$, the computation proceeds in the same manner as above by executing rules 3 and 4 from Table~\ref{fig:Hrules}.

    \paragraph{Simulate $L$ instruction}
    The Wang~B machine \emph{left move} instruction is encoded as $\tne{L} = 001$ on the $P$ tape. 
    If the tape head of $W$ is reading a 0 then we have a configuration of the form
    \begin{xalignat*}{4}
	    q_1&  & P = \ldots 1\, \underline{0}01 \, 01 \ldots & & C = \ldots  1\underline{0}1\ldots & &W = \ldots  1\underline{0}0\ldots 
    \end{xalignat*}  
    Rules 3, 5 and 6 from Table~\ref{fig:Hrules} are applied to the above configuration to give
    \begin{xalignat*}{4}
	    q_1&  & P = \ldots 1\, 001 \, \underline{0}1 \ldots & & C = \ldots  1\underline{0}1\ldots & &W = \ldots  \underline{1}00\ldots 
    \end{xalignat*}  
     The tape head of $W$ was moved one place to the left to simulate the $L$ instruction. Also, the tape head on the $P$ tape has moved right 3 places to the leftmost symbol of the next encoded instruction ($\tne{R} = 01$), and the current state of $\mathcal{H}$ is once again $q_1$. So the simulation of the $L$ instruction is complete and $\mathcal{H}$ is configured to begin simulation of the next Wang machine instruction. In the case where the tape head of $W$ is reading a $1$, the computation proceeds in the same manner as above by executing rules 3, 5 and 6 from Table~\ref{fig:Hrules}.
    
    \paragraph{Simulate $I_k=J(x)$ instruction}
      There are two cases to consider here, which are determined by the value of
      read symbol of the simulated Wang~B machine.

    \emph{Case 1. Wang~B machine's read symbol is $0$.} 
    In this case, $\HM$ simulates program control for $\WBM$ moving from instruction~$I_k$ to instruction~$I_{k+1}$.
    This is simulated by moving the tape head to the leftmost symbol of $\tne{I_{k+1}}$.
    Instruction $I_k=J(x)$ is encoded as
      $\tne{I_k} = \tne{J(x)}=\underline{0}000^y1$
      for some $y \in \{ 0,1,2, \ldots \}$ ($y$ is defined below),
      and for the purposes of explanation we assume that $I_{k+1}=L$. 
    This gives the  configuration
    \begin{xalignat*}{4}
	    q_1 &  
          & P = \ldots 1\, \underline{0}000^y1\, 001 \ldots &
          & C = \ldots \underline{0}1 \ldots & 
          & W = \ldots \underline{0}\ldots
    \end{xalignat*}
    After applying rules 3, 5 and 7 from Table~\ref{fig:Hrules} we get the following
    \begin{xalignat*}{4}
      q_3& 
         & P = \ldots1\, 000\underline{0}0^{y-1}1\, 001 \ldots & 
         & C = \ldots \underline{0}1 \ldots & 
         & W = \ldots \underline{0} \ldots
    \end{xalignat*}
    Next, rule 8 is applied $y$ times followed by a single application of rule 9 to give
    \begin{xalignat*}{4}
	    q_1& 
         & P = \ldots 1\, 0000^{y}1\, \underline{0}01\ldots & 
         & C = \ldots \underline{0}1\ldots & 
         & W = \ldots \underline{0} \ldots
    \end{xalignat*}
    In the configuration immediately above, the simulation of $J(x)$ when the
    Wang machine read symbol is 0 is complete. Note that  $\HM$ has returned
    to state $q_1$ and the tape head of $P$ is over the leftmost symbol of the
    encoded instruction $\tne{I_{k+1}}=001$. 
          
    \emph{Case 2. Wang~B machine read symbol is 1.}
     In this case, simulating the instruction $I_k=J(x)$ 
      involves    moving the $P$ tape head to the leftmost symbol of $\tne{I_{x}}$. 

    We begin with an overview, which includes specifying the encoding of jump instructions.
    Each encoded instruction contains a single 1 symbol, and so as we move
    through the $P$ tape we can count the number of encoded instructions by
    counting the number of 1 symbols.  
    If $x \leqslant k$, then, from Equation~\eqref{eq:Wang machine to Hasenjaeger UTM}, 
    we can move from $\tne{I_k}$ to $\tne{I_x}$ on $P$ by moving 
    left until we have read the symbol 1 exactly $k-x +1$ times,
      and then moving right.
    Recall that the $P$ tape is circular, and so if $x > k$,
      using Equation~\eqref{eq:Wang machine to Hasenjaeger UTM},
      we move from $\tne{I_k}$ to $\tne{I_x}$ on $P$ by moving left until
      we have read the 1 symbol exactly  $(n+2+k-x)$ times, and then moving
      right. 
    We are now ready to give the encoding for jump instructions.
      $$\tne{I_k} = \tne{J(x)}=\underline{0}000^y1$$ where  
    \begin{equation}\label{eq:y}
         y =
          \begin{cases} 
            k - x       & \text{if\quad} x \leqslant k \\
            n+1 + k - x & \text{if\quad} x > k 
          \end{cases}
    \end{equation} 

    In the simulation, moving from $\tne{I_{k}}$ to $\tne{I_{x}}$ is done in 2 stages.
    In the first stage the word
      $\tne{J(x)}=\underline{0}000^y1$ is read and the value $y+1$ is recorded 
      by the tape head position on the $C$ tape.
    In the second stage, using the value stored on the $C$ tape, 
      the tape head of $P$ moves left until we have read the symbol $1$ exactly $y+1$ times.
    So, the tape head of $P$ finishes its scan left immediately to the left of the 1 in $\tne{I_{x-1}}$; from there it moves right two cells to the leftmost symbol of $\tne{I_{x}}$. 

    Now we give the details of how $\HM$ simulates a jump from instruction $I_{k}$ to instruction $I_{x}$.
    For the purposes of illustration we assume the instruction to the left of
    $I_{k}$ is $I_{k-1}=L$. This gives the configuration
    \begin{xalignat*}{4}
	    q_1 &  
          & P = \ldots 001\, \underline{0}000^y1\ldots &
          & C = \ldots \underline{0}1^{y+1}\ldots &
          & W = \ldots \underline{1}\ldots
    \end{xalignat*}
    First, rules 3, 5 and 7 from Table~\ref{fig:Hrules} are applied,
      and then rule 10 is applied $y$ times, followed by a single application of rule 11, to give         
    \begin{xalignat*}{4}
	    q_4&
         & \ldots P = 001\, 0000^{y-1}\underline{0}1\ldots & 
         & C = \ldots 01^{y}\underline{1}\ldots & 
         & W = \ldots \underline{1} \ldots
    \end{xalignat*} 
    In the configuration immediately above the value $y+1$ is recorded by the
    position of the tape head of $C$, which is over $y+1^{\textrm{th}}$ symbol
    to the right of the single $0$ symbol.
    Rule 12 is applied $y+3$ times to give
    \begin{xalignat*}{4}
	    q_4& 
         & \ldots P =  00\underline{1}\, 0000^{y}1\ldots & 
         & C = \ldots 01^{y}\underline{1}\ldots & 
         & W = \ldots \underline{1} \ldots
    \end{xalignat*}
    When 1 is read on tape $P$ the value stored on tape $C$  is decremented by
    moving left once on  $C$ using rule 13. This gives
    \begin{xalignat*}{4}
      q_4&  
         & \ldots P =  0\underline{0}1\, 0000^{y}1\ldots &
         & C = \ldots 01^{y-1}\underline{1}1\ldots & 
         & W = \ldots \underline{1} \ldots
    \end{xalignat*}
    The above process of decrementing the value stored in $C$ by applying rules 12 and 13 continues until the tape head of $C$ reads a $0$, indicating that the scan left is finished (during this process Rule 13 is applied a total of $y+1$ times). At this point we have a configuration that is of one of the following two forms
	   \begin{xalignat*}{4}
	    q_4&  & P =\ldots  \underline{0}1\ldots  & & C = \ldots \underline{0}1\ldots & & W = \ldots \underline{1} \ldots\\
	    q_4&  & P =\ldots  \underline{1}1\ldots  & & C = \ldots \underline{0}1\ldots & & W = \ldots \underline{1} \ldots
	    \end{xalignat*} 
	    Rule 13 was applied $y+1$ times reading a 1 each time. 
      Rules 14 and 2 are applied to move the tape head of $P$ right twice,
      placing it over the leftmost symbol of instruction $\tne{I_{x}}$ to
      complete the simulation of $I_k=J(x)$. 

  \paragraph{Simulation of halting} 
    Recall from Section~\ref{sec:Wang B machines simulate NETMs},
    that a Wang~B machine halts when it attempts to move to 
    the non-existent instruction $I_{n}$ after executing instruction $I_{n-1}$.
    Since  $\HM$  does not have a
    distinguished halt state, it instead
    simulates halting by entering a repeating sequence of configurations. 
    Note that in Equation~\eqref{eq:Wang machine to Hasenjaeger UTM}, 
      as part of the Wang~B machine encoding,
      there is an extra instruction ($I_{n}=J(n)$) that jumps to itself.
    So when program $\mathcal{H}$ simulates a Wang~B machine that halts by 
    attempting to move to instruction $I_{n}$, the program 
    simulates the instruction $J(n)$ 
    which results in an infinite loop and signals the end of the simulation.
    (The jump instruction works as intended only if we have the
    assumption that the cell under $W$'s tape head reads 1; it is easy to
    modify any Wang~B program so that this is the case by having the program
    end with a single mark instruction, i.e.\ $I_{n-1} = M$).
    This jump works as follows.          
    From Equation~\eqref{eq:y}, a jump instruction of the form $I_n = J(n)$ is
    encoded as $\tne{I_n } =  \tne{J(n)} = 0001$.     
    This gives the configuration
    \begin{xalignat}{4}\label{eq:loop}
	    q_1 &  
          & P = \ldots \underline{0}001\ldots &
          & C = \ldots \underline{0}1\ldots &
          & W = \ldots \underline{1}\ldots
    \end{xalignat}
    From here, $\mathcal{H}$ simulates the jump instruction $I_n = J(n)$, as
    described above. In this simple case, simulating the jump instruction
    involves executing exactly 10 rules (see Table~\ref{fig:Hrules}) after
    which $\mathcal{H}$ returns to configuration~\eqref{eq:loop}. Hence we get
    an infinite loop where the tape contents are unchanged.
    
    \paragraph{Complexity analysis}
      The Wang~B machine instructions $M$, $R$, and $L$ are each simulated by
      Hasenjaeger's machine in 1, 2 and 3 timesteps, respectively. The $J(x)$
      instruction is simulated in $O(n^2)$ timesteps, where $n$ is the number
      of instructions in the Wang~B machine program. Note that we consider $n$
      to be a constant, independent of the input length. 
      Therefore, Hasenjaeger's  program~$\HM$ simulates $t$ steps of the Wang~B machine~$\mathcal{W}$ in time $O(t)$.
    \end{proof}

\section{Hooper's small universal Turing machine simulates Turing machines in polynomial time}\label{sec:Hooper}
      Hooper~\cite{Hooper1963,Hooper1969} gave a small universal Turing machine with 1 state, 2 symbols and 4 tapes. Using similar techniques to Hasenjaeger, Hooper proved his machine universal by simulating a restricted class of Wang~B machines. In Hooper's machine, a non-erasing work tape contains exactly the same contents as the tape of the Wang~B machine it simulates, and a read-only unidirectional circular  program tape stores the encoded Wang machine program. Hooper used a relative addressing technique like Hasenjaeger, but unlike Hasenjaeger, Hooper used two read-write counter tapes (instead of one read-only tape). Hooper's simulation of Wang~B machines runs in linear time, and so from Lemma~\ref{lem:Non-ersasing-TM} and by suitably  modifying the proof of Theorem~\ref{thm:Wang B simulates NETM}  we get the following result.  
    \begin{theorem}
    Let $\mathcal{M}$ be a deterministic Turing machine with a single binary tape that runs in time~$t$. Then Hooper's small universal Turing machine  with 1~state, 2~symbols, and 4~tapes~\cite{Hooper1963,Hooper1969}  simulates the computation of $\mathcal{M}$ in time~$O(t^3)$.
    \end{theorem}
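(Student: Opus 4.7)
The plan is to establish the result by composing three efficient simulations, matching the chain of simulations used for Hasenjaeger's machine but with the final stage redirected to Hooper's machine. Specifically, given a deterministic binary single-tape Turing machine $\mathcal{M}$ running in time $t$, I would first invoke Lemma~\ref{lem:Non-ersasing-TM} to obtain a non-erasing Turing machine $\NETM_\mathcal{M}$ that simulates $\mathcal{M}$ in time $O(t^3)$. Next, I would construct a Wang~B machine $\WBM_{\NETM_\mathcal{M}}$ that simulates $\NETM_\mathcal{M}$ in linear time, in the spirit of Theorem~\ref{thm:Wang B simulates NETM}. Finally, I would feed $\WBM_{\NETM_\mathcal{M}}$ to Hooper's small universal Turing machine, which simulates Wang~B machines in linear time; composing the time bounds $O(t^3) \cdot O(1) \cdot O(1)$ yields the required $O(t^3)$.

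The only real work is in the middle stage: Hooper's machine simulates a \emph{restricted} subclass of Wang~B machines, so the construction of Theorem~\ref{thm:Wang B simulates NETM} cannot be used verbatim. I would therefore first pin down exactly the restrictions Hooper imposes (for example, constraints arising from the unidirectional circular program tape, the form or placement of jump targets, and the use of two read/write counter tapes for relative addressing), and then verify, or where necessary minimally modify, the program $\WBM_\NETM$ in Equation~\eqref{eq:W_N} and the sub-sequences $\tne{TR_{q_i,0}}, \tne{TR_{q_i,1}}$ in Equations~\eqref{eq:instructions sequence for read 0 transition rule} and~\eqref{eq:instructions sequence for read 1 transition rule} so that every instruction produced lies in Hooper's restricted class. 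Padding with dummy $M$ instructions (as already used to make each encoded state exactly 13 instructions long) and re-targeting each $J(\cdot)$ to honour any structural restrictions should suffice without changing the per-step count; since each simulated transition still uses at most a constant number of B-machine instructions, the $O(t)$ bound of Theorem~\ref{thm:Wang B simulates NETM} carries over.

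Once the modified $\WBM_{\NETM_\mathcal{M}}$ is in the class Hooper's machine simulates, linear-time simulation by Hooper's machine is given. The main obstacle I anticipate is the restriction bookkeeping in the middle step: ensuring that the encoding in Equation~\eqref{eq:encoded non-erasing Turing machine tape} and the transition-rule sequences are compatible with Hooper's restrictions, and that the end-of-computation convention from the halting paragraph of Section~\ref{sec:Wang B machines simulate NETMs} is expressible in Hooper's framework (e.g.\ ending the program with a mark instruction so that the final self-jump in Equation~\eqref{eq:Wang machine to Hasenjaeger UTM} behaves as intended). None of these modifications alter the asymptotic cost, so the composed simulation runs in $O(t^3)$ steps, establishing the theorem.
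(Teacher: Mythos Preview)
Your proposal is correct and follows essentially the same route as the paper: compose Lemma~\ref{lem:Non-ersasing-TM} with a suitably modified version of Theorem~\ref{thm:Wang B simulates NETM}, then invoke Hooper's linear-time simulation of (restricted) Wang~B machines. The paper makes explicit the three restrictions Hooper imposes---(i) every jump is immediately followed by an $L$ or $R$, (ii) every jump target is an $L$ or $R$, (iii) an $M$ is executed only on a $0$ cell---and then patches the construction of Theorem~\ref{thm:Wang B simulates NETM} accordingly, recomputing the uniform block length (it becomes $54$ rather than $13$). One small caveat: your suggestion to pad with dummy $M$ instructions is exactly what restriction~(iii) forbids, since those padding $M$'s in Equations~\eqref{eq:instructions sequence for read 0 transition rule} and~\eqref{eq:instructions sequence for read 1 transition rule} can land on a $1$; the paper instead pads with no-op $L,R$ pairs and replaces each $M$ by a guarded sequence $J(k{+}4),R,L,M,R,L$ that skips the mark when the cell already holds~$1$.
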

    \begin{proof}
    Hooper's machine simulates Wang~B machines with the following restrictions:
    \begin{enumerate}
     \item In the program list if $I_k=J(x)$, then $I_{k+1}\in\{L,R\}$.
     \item Each jump instruction jumps to  $\{L,R\}$. 
     \item $M$ instructions are executed only on tape cells that contain 0.
    \end{enumerate}
    Our proof of Theorem~\ref{thm:Wang B simulates NETM} is easily modified to include the above restrictions. For restriction 1, we add the instruction sequence $L,R$ after each jump instruction in the program. This has no effect on the program as a move left followed by a move right has the same effect  as no move. Our proof already satisfies restriction~2, as we either jump to the beginning of the sequence encoding a state $q_i$ (that is:  $R,J(13i+8),\tne{TR_{q_i,0}}, \tne{TR_{q_i,1}}$) or we jump to the beginning of a sequence of the form $\tne{TR_{q_i,1}}$ (given in Equation~\eqref{eq:instructions sequence for read 1 transition rule}). To satisfy restriction~3, each $I_k=M$ instruction is replaced with the sequence $J(k+4),R,L,M,R,L$. The $J(k+4)$ will jump over the $M$ instruction if the cell already contains a 1, and the extra $R,L$ instructions are introduced to satisfy restrictions 1 and 2. 
    
    In addition to the above changes, we wish to maintain the property from the proof of Theorem~\ref{thm:Wang B simulates NETM} that the number of instructions used to encode each Turing machine state is the same for all states. Recall from Theorem~\ref{thm:Wang B simulates NETM} that each state $q_i$ is encoded as the sequence of 13 instructions $R,J(13i+8),\tne{TR_{q_i,0}},\tne{TR_{q_i,1}}$. This sequence has 3 jump instructions and to satisfy restriction 1 we added the extra instruction pair $L,R$ for each jump. For restriction 3, we replaced each $M$ instruction with $J(k+4),R,L,M,R,L$. In Equation~\eqref{eq:instructions sequence for read 1 transition rule} this gives an extra 15 instructions for the case $(q_i,1,1,R,q_j)$ and an extra 5 for the case $(q_i,1,1,L,q_j)$. To ensure that the instruction sequence is the same length for each case we append the length-10 sequence $(L,R,)^5$ to the sequence for case $(q_i,1,1,L,q_j)$. Satisfying restriction~3 in Equation~\eqref{eq:instructions sequence for read 0 transition rule} gives an extra 20 instructions for the cases $(q_i,0,0,R,q_j)$ and $(q_i,0,1,R,q_j)$, and an extra 10 for cases $(q_i,0,0,L,q_j)$ and $q_i,0,1,L,q_j$. To ensure that the instruction sequence is the same length for each case we append the length-10 sequence $(L,R,)^5$ to the sequences for case $(q_i,0,0,L,q_j)$ and case $(q_i,0,1,L,q_j)$. Now the length of the sequence that encodes each state is 54 (instead of 13), and so we replace jumps of the from $J(13i)$ with jumps of the from $J(54i)$. The sequence $R,J(13i+8),\tne{TR_{q_i,0}}$ of length 8 has been replaced by a sequence of length 32, and so we replace jumps of the from $J(13i+8)$ with jumps of the form $J(54i+32)$. This completes our conversion to a Wang~B machine with the 3 restrictions mentioned above. 
    \end{proof}    

  \section*{Acknowledgements}
  We thank Benedikt L\"{o}we for inviting us to the  Isaac Newton Institute for Mathematical Sciences in Cambridge (UK) to participate in the 2012 programme ``Semantics and Syntax: A
  Legacy of Alan Turing'', and for facilitating this fun project.  Finally, we thank David Soloveichik for
  translating the work of Zykin~\cite{Zykin1963}.

    \ifelsevierversion
      \bibliographystyle{elsarticle-num}
    \else
      \bibliographystyle{plain} 
    \fi
    \bibliography{has}
\end{document}